\tikzstyle{decnode} = [rectangle, minimum width=0.25cm, minimum height=0.25cm, draw=black]
\tikzstyle{natnode} = [circle, minimum width=0.25cm, minimum height=0.25cm, draw=black]
\tikzstyle{arrow} = [thick,->,>=stealth]
\newtheorem{theorem}{Theorem}
\newtheorem{lemma}{Lemma}
\newtheorem{proposition}{Proposition}
\theoremstyle{definition}
\newtheorem{definition}{Definition}[section]
\theoremstyle{remark}
\theoremstyle{definition}
\newcommand{\R}{\mathbb{R}}
\newcommand{\F}{\mathcal{F}}
\newcommand{\Pclass}{\mathcal{P}}
\newcommand{\Y}{\mathcal{Y}}
\newcommand{\W}{\mathcal{W}}
\newcommand{\Sent}{\mathcal{S}}
\title{A global accuracy characterisation of trust}
\author{Giacomo Molinari}
\begin{document}

\maketitle

\begin{abstract}
  \citet{dorst2021deference} put forward a deference principle called Total Trust, and characterise
  it in terms of accuracy: an agent totally trusts an expert iff they expect
  the expert to be more accurate than them. This note gives a new proof of their result
  using a global definition of accuracy due to \citeauthor{konek2023evaluating} (Forthcoming),
  rather than the local one used in the original. This allows for a simpler, direct proof of the global
  characterisation result.
\end{abstract}

\section{Notation and definitions}

Let $\W = \{w_1, ..., w_n\}$ a finite set of possible worlds. We can
capture the agent's opinions with a probability mass function $p : \Omega \to [0,1]$. A subset $A \subseteq \W$
called an \textit{event}, and I denote by $\F$ the set of all events. A probability mass function $p$ on
$\Omega$ induces a \textit{probability function} $Pr$ on $\F$, defined by:
\begin{equation}
  Pr(A) = \sum_{i=1}^n p(A)
\end{equation}
for every $A \in \F$.

A \textit{random variable} is a function $X : \W \to \R$, which we can think of as the following vector in $\R^n$:
\begin{equation}
  (x_1, ..., x_n) := (X(w_1), ..., X(w_n))
\end{equation}
We can think of a random variable $X$ as a \textit{gamble} on $\W$, where $X(w_i)$ is gain/loss 
resulting from the gamble if $w_i$ is the case.

Any probability mass function $p$ on $\W$ induces an expectation function over the set of all gambles:
\begin{equation}
  Exp_p(X) = \sum_{i=1}^n p(w_i)x_i
\end{equation}
Whenever $A \subseteq \W$ is an event, there is some gamble $I_A$ such that $I_A(w_i) = 1$ if $w_i \in A$,
and $I_A(w_i) = 0$ otherwise. Then if $Pr$ is the probability function induced by $p$, we have
$Pr(A) = Exp_p(I_A)$. Hence we can recover $Pr$ from $Exp_p$. I will abuse the notation and write $p$ for
both the probability mass function on $\W$, and the corresponding expectation function on $\R^n$, since
the latter just extends the former. Functions $f: \R^n \to \R$ are called \textit{previsions}, so $p$ will be
referred to as the agent's \textit{prevision}. The set of all previsions on $\R^n$ is denoted by $\Pclass$.
A prevision on $\R^n$ is \textit{coherent} iff it is the expectation function of some probability mass
function on $\W$.

If $A \subseteq \F$ and $p(A) \neq 0$, we can define a conditional probability mass function $p(\cdot |A)$ over $\W$,
which in turn induces conditional probability and expectation functions as above. Again I will denote this conditional
expectation function by $p(\cdot|A)$, and refer to it as the agent's \textit{conditional prevision}.

\section{A local accuracy characterisation of deference}

We are interested in deference principles, which specify what it means for an agent to defer to an expert.
I will use $\pi$ as the rigid designator of the agent's prevision, and $P$ as the definite description of the
prevision of a potential expert. That is, for every $w_i \in \W$, $P_i$ is the expert's prevision if $w_i$ is
the case. I will assume throughout that the agent's prevision $\pi$, and all the possible expert previsions $P_i$,
are coherent. I will write $[P(X) \geq 0]$ as shorthand for the event $\{w_i : P_i(X) \geq 0\} \subseteq \W$.
\citet{dorst2021deference} put forward the following local and global deference principles:
\begin{itemize}
\item \textbf{Total Trust (local)}\\
  For any random variable $X$, $\pi$ defers to $P$ with respect to $X$ iff:
  \begin{equation}
    p(X | \left[P(X) \geq t\right]) \geq t
  \end{equation}
  for any $t \in \R$ such that this conditional prevision is defined. If this is the case,
  we say that $\pi$ \textit{totally trusts} $P$ with respect to $X$.
\end{itemize}

\begin{itemize}
\item \textbf{Total Trust (global)}\\
  $\pi$ defers to $P$ on $X$ iff $\pi$ totally trusts $P$ with respect to every random variable $X$.
  That is, $\pi$ defers to $P$ iff:
  \begin{equation}
    \label{ttglobal}
    p(X | \left[P(X) \geq t\right]) \geq t
  \end{equation}
  for any $X: \W \to \R$ and $t \in R$ such that this conditional prevision is defined. Equivalently,
  $\pi$ defers to $P$ iff:
  \begin{equation}
    \label{ttglobal}
    p(X | \left[P(X) \geq 0 \right]) \geq 0
  \end{equation}
  for any random variable $X: \W \to \R$ such that this conditional prevision is defined.
\end{itemize}

One of the main results in \citep{dorst2021deference} is that $\pi$ totally trusts $P$ with respect to $X$
iff $\pi$ expects $P$ to be at least as accurate as itself with respect to $X$. To spell out the result, we
need to make this notion of accuracy precise by specifying a class of reasonable inaccuracy measures. For
any random variable $X$, a \textit{local inaccuracy measure} relative to $X$ is a function
$I_X : (\Pclass \times \W) \to \R$. When $p \in \Pclass$ and $w \in \W$, the value $I_X(p, w)$ quantifies
the inaccuracy of $p$ with respect to $X$ when $w$ is the case. Reasonable local inaccuracy measures are
characterised by the following local version of strict propriety: 
\begin{itemize}
\item \textbf{Strict propriety (local)}
  A local inaccuracy measure $I_X$ is strictly proper iff for any coherent prevision $p$, and any prevision $q$,
  we have:
  \begin{equation}
    p(I_X(p, \cdot)) \leq q(I_X(q, \cdot))
  \end{equation}
  with equality holding just in case $p(X) = q(X)$.
\end{itemize}

The accuracy characterisation of deference given in \citep{dorst2021deference} involves the local notion of Total
Trust, and the strictly proper local measures of inaccuracy defined above.
\begin{theorem}
  For any random variable $X$, $\pi$ totally trusts $P$ with respect to $X$ iff for every strictly proper local
  inaccuracy score $I_X$:
  \begin{equation}
    \pi(I_X(P, \cdot)) \leq \pi(I_X(\pi, \cdot))
  \end{equation}
\end{theorem}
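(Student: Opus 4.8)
The plan is to reduce every strictly proper local measure to a single strictly concave generating function and then read off both directions of the equivalence from one identity. First I would prove a Savage-style representation. Fix $X$ and write $\mu = \pi(X)$ and $\mu_i = P_i(X)$. The equality clause in local strict propriety says that $p(I_X(p,\cdot))$ depends on $p$ only through the value $p(X)$, so I can set $\phi(\mu) := p(I_X(p,\cdot))$ for any coherent $p$ with $p(X)=\mu$. Strict propriety then forces $\phi$ to be strictly concave and yields the pointwise form $I_X(p,w) = \phi(p(X)) + \phi'(p(X))\,(X(w)-p(X)) + h(w)$, where $h$ is a baseline independent of $p$ (the inaccuracy is the value at $X(w)$ of the supporting line to $\phi$ at $p(X)$, up to baseline), and conversely every strictly concave $\phi$ arises this way. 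Since $h$ cancels in the difference and $\sum_i \pi(w_i)X(w_i) = \mu$, this collapses the quantity of interest to
\[ \pi(I_X(\pi,\cdot)) - \pi(I_X(P,\cdot)) = \phi(\mu) - \sum_{i}\pi(w_i)\big[\phi(\mu_i) + \phi'(\mu_i)\big(X(w_i)-\mu_i\big)\big]. \]

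Next I would turn this into total-trust quantities. Treating $-\phi''$ as a non-negative measure and using the integral form of Taylor's theorem, each summand equals $\int_{\mu}^{\mu_i}(-\phi''(t))\,(X(w_i)-t)\,dt$; taking the $\pi$-average and exchanging sum and integral gives
\[ \pi(I_X(\pi,\cdot)) - \pi(I_X(P,\cdot)) = \int_{\mu}^{\infty}(-\phi''(t))\,F_{\ge}(t)\,dt \;+\; \int_{-\infty}^{\mu}(-\phi''(t))\,F_{\le}(t)\,dt, \]
where $F_{\ge}(t) = \pi\big((X-t)\,I_{[P(X)\ge t]}\big)$ and $F_{\le}(t) = \pi\big((t-X)\,I_{[P(X)\le t]}\big)$ are the unnormalised deference inequalities: $F_{\ge}(t)\ge 0$ is $\pi(X\mid[P(X)\ge t])\ge t$, and $F_{\le}(t)\ge 0$ is $\pi(X\mid[P(X)\le t])\le t$, the latter being total trust applied to $-X$.

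Both directions now follow. If $\pi$ totally trusts $P$, then $F_{\ge}$ and $F_{\le}$ are everywhere non-negative and, because $-\phi''\ge 0$, the right-hand side is non-negative for every strictly proper $I_X$; this is the accuracy inequality. For the converse I would argue contrapositively: a failure of total trust produces a threshold $t_0$ at which the governing $F$ is strictly negative, and I can then choose a strictly concave $\phi$ whose curvature $-\phi''$ is concentrated in a small neighbourhood of $t_0$ — strictly concave everywhere, hence still strictly proper — so that the integral, and thus the gap, becomes strictly negative; the corresponding $I_X$ witnesses the failure of the accuracy inequality.

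The main obstacle is the representation step and the bookkeeping around it. Establishing that strict propriety makes $\phi$ well-defined and strictly concave — the one-dimensional Savage/McCarthy argument — is where the substance lies, and I would carry it out allowing non-differentiable $\phi$, working with one-sided derivatives and the second-derivative measure rather than assuming smoothness. The threshold identity also needs care about the split at $t=\mu$ and about which tail, $F_{\ge}$ or $F_{\le}$, controls the sign on each side; and the converse construction must localise the curvature of $\phi$ while keeping it strictly concave so that the witnessing measure is genuinely strictly proper.
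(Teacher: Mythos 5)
The paper never actually proves this theorem --- it is imported from \citet{dorst2021deference}, and the note's own contribution is the global analogue (Proposition 2), proved by a quite different argument via Konek-style desirable-gamble error sets. Your Schervish/Savage threshold decomposition is essentially the route of the original Dorst et al.\ proof rather than anything in this paper, and your central identity is correct: integrating $\int_{\mu}^{\mu_i}(-\phi''(t))(X(w_i)-t)\,dt$ by parts does recover the Bregman gap, the term $\phi'(\mu)(X(w_i)-\mu)$ does vanish in the $\pi$-average, and exchanging sum and integral does produce the two tail functionals $F_{\ge}$ and $F_{\le}$. But there are two genuine gaps.

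First, your generating function is not well-defined as you introduce it. You set $\phi(\mu) := p(I_X(p,\cdot))$ for any coherent $p$ with $p(X)=\mu$, but strict propriety does not make this independent of the choice of $p$: the quadratic score $I_X(q,w) = (q(X)-X(w))^2$ is strictly proper, yet $p(I_X(p,\cdot)) = \mathrm{Var}_p(X)$, and distinct coherent $p$ with the same mean for $X$ generally have different variances. The correct representation carries the extra $p$-dependence in the baseline: $I_X(q,w) = \phi(q(X)) + \phi'(q(X))(X(w)-q(X)) + h(w)$ with, for the quadratic score, $\phi(\nu)=-\nu^2$ and $h(w)=X(w)^2$, so that $p(I_X(p,\cdot)) = \phi(p(X)) + p(h)$. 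Your downstream identity survives because $\pi(h)$ cancels in the difference, but $\phi$ must be extracted from the dependence of $I_X(q,w)$ on $q$, not from the diagonal map $p\mapsto p(I_X(p,\cdot))$; and one must separately argue that strict propriety forces $I_X(q,w)$ to factor through $q(X)$, which is not immediate from the paper's definition (which, as written, also seems to contain a typo: read literally, $p(I_X(p,\cdot))\le q(I_X(q,\cdot))$ for all coherent $p,q$ forces $p(X)=q(X)$ for all coherent $p,q$; it should be $p(I_X(p,\cdot))\le p(I_X(q,\cdot))$).

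Second, and more seriously, your forward direction uses $F_{\le}(t)\ge 0$, which you correctly gloss as ``total trust applied to $-X$'' --- but that is not part of the hypothesis. The paper's local Total Trust with respect to $X$ constrains only $\pi(X\mid[P(X)\ge t])$, and this one-sided condition does not imply $\pi(X\mid[P(X)\le t])\le t$. Concretely: take two worlds, $\pi$ uniform, $X=(0,10)$, and $P_1=P_2$ a fixed coherent prevision with $P_i(X)=3$. Every nonempty event $[P(X)\ge t]$ is all of $\W$ and $\pi(X)=5\ge t$ whenever $t\le 3$, so $\pi$ totally trusts $P$ with respect to $X$ as defined; yet for the quadratic score $\pi(I_X(P,\cdot)) = 29 > 25 = \pi(I_X(\pi,\cdot))$. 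So the biconditional is false under the literal one-sided, single-$X$ reading, and your proof silently strengthens the hypothesis to the two-sided version (trust on both $X$ and $-X$), under which the argument does go through. You should either flag that repair of the statement explicitly, or work at the level of the global quantification over all $X$, where the issue disappears because $-X$ is also covered.
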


To get a characterisation of (global) total trust, one needs to specify a reasonable class of global measures
of inaccuracy $I: (\Pclass \times \W) \to \R$, such that $I(p, w)$ summarises the inaccuracy of $p$ over all
random variables when $w$ is the case. Then one needs to show that global analogue of the theorem above holds for
such measures. While this shouldn't be hard to do given the results in \citep{dorst2021deference}, it's natural
to wonder whether there is any way to prove the global result directly. The next section will show this can be done,
using the characterisation of strictly proper global inaccuracy scores given by
\citeauthor{konek2023evaluating} (forthcoming).

\section{A global accuracy characterisation of deference}

\citeauthor{konek2023evaluating} (forthcoming) defines the inaccuracy of a prevision $p$ in terms of the gambles that an agent with this
prevision would find desirable. Recall that gambles are just a way to think about random variables, i.e.
vectors in $R^n$. Given a prevision $p : \R^n \to \R$, we can define its corresponding set of almost-desirable
gambles by:
\begin{equation}
  D_p = \{X \in \R^n : p(X) \geq 0\}
\end{equation}
These gambles are almost-desirable in the sense that their expected payoff according to $p$ is no lower than zero.

For any $w_i \in \W$, the \textit{ideal prevision at} $w_i$ is the one which assigns to each random variable $X$
its actual value $X(w_i) = x_i$. This clearly has to be the most accurate prevision at $w_i$. I abuse the notation and
write $w_i$ to denote both an element of $\W$ and its corresponding ideal prevision. The set of almost-desirable
gambles according to $w_i$ is:
\begin{equation}
  D_{w_i} = \{X \in \R^n : w_i(X) \geq 0\} = \{X : x_i \geq 0\} 
\end{equation}
This is just the set of all gambles whose payoff is non-negative at $w_i$. If $w_i$ is the actual world, then this
is the set of gambles that are actually almost-desirable.

There are two ways a prevision can be said to be inaccurate: it might find some gamble $X$ almost-desirable even
though $X$ is not actually almost-desirable (type 1 error); or it might find some gamble $X$ not almost-desirable
even though $X$ is actually almost-desirable (type 2 error). Each gamble on which $p$'s desirability
assessment differs from the actual one at $w_i$ will thus be a member of (only) one of the two \textit{error sets}:
\begin{flalign}
  \F^p_i  = \{X \in \R^n : X \in D_p \text{ and } X \notin D_{w_i}\} \\
  \Sent^p_i =  \{X \in \R^n : X \notin D_p \text{ and } X \in D_{w_i}\}
\end{flalign}
\citeauthor{konek2023evaluating} (forthcoming) defines his inaccuracy measures to take both kinds of error into account:
\begin{equation}
  I(p, w_i) = \int_{\F^p_i}\lvert x_i \rvert d\mu + \int_{\Sent^p_i}\lvert x_i \rvert d\mu
\end{equation}
and shows that $I$ is strictly proper when (i) $\mu$ is absolutely continuous w.r.t. the product Lebesgue measure,
(ii) $\mu$ gives positive measure to all open subsets of $R^n$, and (iii) $\mu(A) = \mu(-A)$ for every
$A \subseteq \R^n$, where $-A = \{-X \in \R^n : X \in A\}$.

Because the score is defined as an integral, it won't be sensitive to differences over sets of gambles with
Lebesgue measure zero. So it makes sense to introduce a variant of Total Trust that is ``up to sets of measure 0''.
\begin{definition}[Almost Everywhere Trust]
  $\pi$ trusts $P$ \textit{almost everywhere} iff it Totally Trusts $P$ on almost every $X$. That is,
  iff the following set:
  \begin{equation}
    \Y = \{X \in \R^n : \pi(\cdot| [P(X) \geq 0]) \text{ is defined, and } \pi(X | [P(X) \geq 0]) < 0\}
  \end{equation}
  has Lebesgue measure zero.
\end{definition}

The following lemma will be very useful for the other proofs in these notes. It shows that if Total Trust is
violated on some $X \in \R^n$, then it is violated some open subset $\Y$ of $\R^n$. This will
show that Total Trust is equivalent to Almost Everywhere Trust. Furthermore, $\Y$ can be picked
so that for every $Y \in \Y$ and every $w_i \in \W$, $P_i(Y) \neq 0$. This property ensures $\Y$
is symmetric in a sense which will be relevant for the proof of Proposition 2.
\begin{lemma}
  \label{lemma:1}
  Assume $\pi$ does not Totally Trust $P$. Then we can find an open subset $\Y$ of $\R^n$ such that:
  \begin{enumerate}
  \item For every $Y \in \Y$, $\pi(Y | [P(Y) \geq 0]) < 0$, and
  \item For every $Y \in \Y$ and every $w_i \in \W$, $P_i(Y) \neq 0$.
  \end{enumerate}
\end{lemma}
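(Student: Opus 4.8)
The plan is to exploit the fact that the only source of trouble is the discontinuity of the map $Y \mapsto \pi(Y \mid [P(Y) \geq 0])$, caused by the conditioning event jumping as $Y$ varies. Since $\pi$ does not Totally Trust $P$, fix an $X$ for which the event $E := [P(X) \geq 0]$ has positive probability under $\pi$ and $\pi(X \mid E) < 0$. Partition the worlds according to the sign of $P_i(X)$, writing $E_+ = \{i : P_i(X) > 0\}$, $E_0 = \{i : P_i(X) = 0\}$ and $E_- = \{i : P_i(X) < 0\}$, so that $E = E_+ \cup E_0$. Each $P_i$ is coherent, hence linear in its argument, so the locus $\{Y : P_i(Y) = 0\}$ is a hyperplane, and as $Y$ moves a world can enter or leave the conditioning event only by crossing one of these hyperplanes. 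The worlds sitting on such a hyperplane at $X$ are exactly those in $E_0$; if $E_0$ were empty, a small ball around $X$ would already work, so handling $E_0$ is the main obstacle.

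The key step is a single uniform shift. Let $\mathbf{1} = (1, \dots, 1)$ and set $Y_0 = X + \epsilon \mathbf{1}$ for a small $\epsilon > 0$ to be fixed. Coherence of each $P_i$ gives $P_i(\mathbf{1}) = 1$, so $P_i(Y_0) = P_i(X) + \epsilon$. Thus every boundary world is pushed strictly inside the event ($P_i(Y_0) = \epsilon > 0$ for $i \in E_0$), the worlds of $E_+$ stay strictly inside, and — for $\epsilon$ small enough that $\epsilon < |P_i(X)|$ for all $i \in E_-$ — the worlds of $E_-$ stay strictly outside; hence $[P(Y_0) \geq 0] = E$ and moreover $P_i(Y_0) \neq 0$ for every $i$. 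At the same time coherence of $\pi$ makes $Y \mapsto \pi(Y \mid E)$ linear, and a uniform shift raises it by exactly $\epsilon$:
\begin{equation}
  \pi(Y_0 \mid E) = \frac{\sum_{w_i \in E}\pi(w_i)(x_i + \epsilon)}{\sum_{w_i \in E}\pi(w_i)} = \pi(X \mid E) + \epsilon .
\end{equation}
Shrinking $\epsilon$ below $|\pi(X \mid E)|$ as well keeps this negative. It is essential that the shift is upward: it moves all of $E_0$ together into the event, preserving exactly the event $E$ on which the conditional is already known to be negative, whereas a downward shift would drop those worlds and replace $E$ by $E_+$, over which we would have no control.

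Finally I pass from the point $Y_0$ to an open set. At $Y_0$ no world lies on a hyperplane, since $P_i(Y_0) \neq 0$ for all $i$. Each $P_i$ is continuous, so there is an open ball $\Y$ about $Y_0$ on which every $P_i$ retains a constant nonzero sign; this is exactly condition 2, and it also pins the conditioning event to $[P(Y) \geq 0] = E$ for all $Y \in \Y$, so that $\pi(Y \mid [P(Y) \geq 0])$ equals $\pi(Y \mid E)$ there. On $\Y$ the map $Y \mapsto \pi(Y \mid E)$ is continuous and equals $\pi(Y_0 \mid E) < 0$ at the centre, so after shrinking $\Y$ if necessary it stays negative throughout, giving condition 1. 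Thus $\Y$ satisfies both requirements, which is what the lemma asks for.
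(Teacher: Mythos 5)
Your proof is correct and follows essentially the same strategy as the paper's: perturb the witness $X$ upward by a small positive constant so that the conditioning event $[P(X)\geq 0]$ is preserved, every $P_i$ becomes nonzero, and the conditional prevision rises only by that constant, then pass to an open neighbourhood on which all signs are locked. Your ordering (shift to a single point off all hyperplanes, then take a ball) is a clean equivalent of the paper's construction of an open box of upward perturbations from which the hyperplanes $\{Y : P_i(Y)=0\}$ are then excised.
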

\begin{proof}
  We will construct an open subset of $\R^n$ that respects property (1), and then show that this set remains open
  once we remove from it all gambles that are assigned prevision 0 by some $P_i$. Assume $\pi$ does not Totally
  Trust $P$. So there is some gamble $X$ such that:
  \begin{equation}
    \label{TTviolation}
    \pi(X |[P(X) \geq 0]) < 0
  \end{equation}
  From this it follows that for any $\epsilon > 0$ sufficiently small, $\pi(X + \epsilon|[P(X) \geq 0]) < 0$. 
  Let $\lambda = \sup\{\epsilon > 0 : \pi(X + \lambda|[P(X) \geq 0]) < 0\}$. Since $\W$ is finite, we have:
  \begin{equation}
    [P(X) \geq 0] \equiv [P(X) \geq - \epsilon ] \equiv [P(X + \epsilon) \geq 0]
  \end{equation}
  for any $\epsilon > 0$ sufficiently small. Let
  $\xi = \sup \{\epsilon > 0 : [P(X) \geq 0] \equiv [P(X + \epsilon) \geq 0]\}$.
  Now let $\delta = \min\{\xi, \lambda\}$. Clearly $\delta > 0$. For every gamble $Z$ such that
  $0 < \inf(Z)$ and $\sup(Z) < \delta$, we have that $[P(X + Z) \geq 0]$ is
  equivalent to $[P(X) \geq 0]$, since $0 < P(Z)$ and $P(Z) \leq \sup(Z) < \delta \leq \xi$. So we have:
  \begin{flalign}
    \pi(X + Z | [P(X + Z) \geq 0]) =& \pi(X + Z|[P(X) \geq 0])\\
    &\leq \pi(X + \sup(Z) | [P(X) \geq 0])\\
    & < 0 \text{ (because $\sup(Z)< \delta \leq \lambda$)}
  \end{flalign}
  So $\pi$ does not trust $P$ on any gamble $(X + Z)$ where $0 < \inf(Z)$ and $\sup(Z) < \delta$.
  Define the the set $\Y_\delta$ as the set of all such gambles:
  \begin{equation}
    \Y_\delta = \{(X + Z) \in \R^n: 0 < \inf(Z) \text{ and } \sup(Z) < \delta\}
  \end{equation}
  This set is open w.r.t the product topology on $\R^n$, and Total Trust is violated on every element of $\Y_\delta$.
  Now for every $w_i \in \W$, let $\Theta_i = \{X : P_i(X) = 0\}$ the set of gambles with prevision 0
  according to expert $P_i$. Since the $P_i$'s are coherent previsions, the $\Theta_i$ are hyperplanes in $\R^n$,
  and are therefore closed w.r.t the product topology. So their complements $\Theta_i^c = \R^n {\sim} \Theta_i$
  are open. Now define:
  \begin{equation}
    \Y = \Y_\delta \cap \left(\bigcup_{i=1}^n \Theta_i^c \right) 
  \end{equation}
  Note that the union of the $\Theta_i^c$ is a union of open sets, and thus is open. Its intersection with the open set
  $\Y_\delta$ is therefore also open. The open set $\Y$ has property (1) because it's a subset of $\Y_\delta$,
  and has property (2) because it does not intersect any $\Theta_i$ by construction.
\end{proof}

The next proposition follows immediately from the above lemma. 
\begin{proposition}
  $\pi$ Totally Trusts $P$ iff $\pi$ Almost Everywhere Trusts $P$.
\end{proposition}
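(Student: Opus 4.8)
The plan is to prove the two directions separately, observing that one of them is immediate once the definitions are unpacked. First I would handle the direction from Total Trust to Almost Everywhere Trust. If $\pi$ Totally Trusts $P$, then by definition there is no random variable $X$ with $\pi(X \mid [P(X) \geq 0]) < 0$, so the violation set $\Y$ appearing in the Definition of Almost Everywhere Trust is empty. Since $\emptyset$ has Lebesgue measure zero, $\pi$ Almost Everywhere Trusts $P$. This direction needs no real work beyond matching the two definitions.

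For the converse I would argue by contraposition, aiming to show that if $\pi$ does not Totally Trust $P$ then the violation set $\Y$ has \emph{positive} Lebesgue measure, so that Almost Everywhere Trust fails. Assume $\pi$ does not Totally Trust $P$. Then Lemma \ref{lemma:1} supplies an open set $U \subseteq \R^n$ satisfying property (1): every $Y \in U$ has $\pi(Y \mid [P(Y) \geq 0]) < 0$, and in particular this conditional prevision is defined for each such $Y$. Hence $U$ is contained in the set $\Y$ from the Definition of Almost Everywhere Trust, i.e. $U \subseteq \Y$.

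The key step is then the elementary fact that any \emph{nonempty} open subset of $\R^n$ has strictly positive Lebesgue measure, since it contains an open ball. I would verify nonemptiness of $U$ directly from the construction in Lemma \ref{lemma:1}: the set $\Y_\delta$ there contains $X + c\mathbf{1}$ for the constant gamble $c\mathbf{1}$ with any $c \in (0,\delta)$, and $U$ is obtained from $\Y_\delta$ only by deleting the union of the hyperplanes $\Theta_i = \{Y : P_i(Y) = 0\}$, which is Lebesgue-null; so $U$ remains nonempty (indeed of positive measure). Since $U \subseteq \Y$, it follows that $\Y$ has positive Lebesgue measure, and therefore $\pi$ does not Almost Everywhere Trust $P$, completing the contrapositive.

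I expect the only subtlety to be this last bookkeeping point, namely confirming that the open set delivered by Lemma \ref{lemma:1} is genuinely nonempty so that ``open'' can be upgraded to ``positive measure''; everything else reduces to identifying the lemma's set as a subset of the definitional violation set and invoking the basic measure-theoretic fact that nonempty open sets in $\R^n$ are not null. Note that property (2) of Lemma \ref{lemma:1} is not needed here (it is reserved for Proposition 2); for the present claim property (1) together with nonemptiness suffices.
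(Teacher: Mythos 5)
Your proof is correct and follows essentially the same route as the paper: the forward direction is immediate from the definitions, and the converse invokes Lemma \ref{lemma:1} to produce an open set of violations and then uses the fact that such a set has positive Lebesgue measure. Your explicit check that the lemma's open set is nonempty (via the constant perturbations $X + c\mathbf{1}$ and the null union of hyperplanes) is a small but welcome refinement of the paper's statement ``since $\Y$ is open, it has positive Lebesgue measure,'' which tacitly assumes nonemptiness.
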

\begin{proof}
  If $\pi$ Totally Trusts $P$, then clearly $\pi$ Almost Everywhere Trusts $P$. For the other direction,
  assuming $\pi$ does not Totally Trust $P$, we can find an open subset $\Y$ of $\R^n$ where Total Trust is
  violated (Lemma \ref{lemma:1}). Since $\Y$ is open, it has positive Lebesgue measure. Thus $\pi$
  does not Almost Everywhere Trust $P$.
\end{proof}

Next we give a global characterisation of Total Trust in terms of accuracy.
\begin{proposition}
  $\pi$ Totally Trusts $P$ iff for any global strictly proper inaccuracy score $I$, $\pi$ expects $P$
  to be at least as accurate as itself, that is:
  \begin{equation}
    \label{expInacc}
    \pi(I(P, \cdot) - I(\pi, \cdot)) \leq 0
  \end{equation}
\end{proposition}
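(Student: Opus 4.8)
The plan is to turn the expected score difference into a single integral over $\R^n$ and then exploit the symmetry condition (iii) on $\mu$. First I would simplify the score itself. Writing $\mathbf 1_{[\,\cdot\,]}$ for indicator functions and recalling $\F^q_i = \{X : q(X)\ge 0,\ x_i<0\}$ and $\Sent^q_i = \{X: q(X)<0,\ x_i\ge 0\}$, the two error integrals combine (elementary indicator algebra, using $|x_i| = -x_i$ on $\F^q_i$ and $|x_i| = x_i$ on $\Sent^q_i$) into the clean form
\begin{equation}
  I(q, w_i) = \int_{\R^n} x_i\left(\mathbf 1_{[x_i \ge 0]} - \mathbf 1_{[q(X)\ge 0]}\right)d\mu .
\end{equation}
The first term is a constant independent of $q$. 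Taking the $\pi$-expectation of $I(P,\cdot)$ and of $I(\pi,\cdot)$, these constants cancel, and (by Tonelli, assuming $\mu$ has finite first moments so the scores are real-valued) I may swap $\sum_i$ and $\int$ to obtain
\begin{equation}
  \pi\big(I(P,\cdot) - I(\pi,\cdot)\big) = \int_{\R^n}\big(\pi(X)^+ - \phi(X)\big)\,d\mu ,
\end{equation}
where $\pi(X)^+ = \max(\pi(X),0)$ and $\phi(X) = \sum_{w_i \in [P(X)\ge 0]}\pi(w_i)\,x_i$. Since $\phi(X) = \pi([P(X)\ge 0])\cdot\pi(X\mid[P(X)\ge0])$ whenever the conditional is defined, Total Trust is exactly the condition $\phi \ge 0$ almost everywhere.

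Next I would symmetrise. Because each $\Theta_i = \{X : P_i(X) = 0\}$ is a hyperplane it is Lebesgue-null, hence $\mu$-null by (i); so for $\mu$-a.e.\ $X$ every $P_i(X)\ne 0$, and consequently $[P(-X)\ge 0]$ is the exact complement of $[P(X)\ge 0]$. Setting $a(X) = \sum_{P_i(X)>0}\pi(w_i)x_i = \phi(X)$ and $c(X) = \sum_{P_i(X)<0}\pi(w_i)x_i$, one gets $\phi(-X) = -c(X)$ and $\pi(-X)^+ = \pi(X)^-$ a.e. Using $\mu(A)=\mu(-A)$ to replace the integrand by the average of its values at $X$ and $-X$ yields
\begin{equation}
  \pi\big(I(P,\cdot) - I(\pi,\cdot)\big) = \tfrac12\int_{\R^n}\big(|a+c| - (a - c)\big)\,d\mu .
\end{equation}

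The forward direction is then where the symmetry does the real work, and I expect it to be the main obstacle. Under Total Trust, applying $\phi\ge 0$ at $X$ gives $a\ge 0$, and applying it at $-X$ gives $-c = \phi(-X)\ge 0$, i.e.\ $c\le 0$; hence $a - c = |a|+|c|\ge |a+c|$ by the triangle inequality, so the integrand is $\le 0$ pointwise and the integral is $\le 0$. The subtlety is that the naive pointwise inequality $\phi(X)\ge \pi(X)^+$ is false — trust controls $\phi$ on the worlds where the expert finds $X$ desirable but says nothing directly about $\pi(X)^+$ — so there is no argument without first pairing $X$ with $-X$ through condition (iii).

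For the converse I would argue contrapositively. If $\pi$ does not Totally Trust $P$, Lemma \ref{lemma:1} supplies an open set $\Y$ on which $\phi = a < 0$ and every $P_i(Y)\ne 0$. A short case split forced by $a<0$ (if $a+c\ge0$ then $c\ge -a>0$; if $a+c<0$ then $-a>0$) shows the symmetrised integrand $\tfrac12\big(|a+c|-(a-c)\big)$ is strictly positive throughout $\Y$. It then remains to exhibit one $\mu$ meeting (i)--(iii) with positive integral: I would take Lebesgue measure on a small ball $B\subset\Y$ together with its reflection $-B$, plus $\kappa$ times a symmetric everywhere-positive finite-moment density (e.g.\ a Gaussian) to secure full support (ii) and symmetry (iii); for $\kappa$ small enough the positive contribution from $B\cup(-B)$ dominates, giving $\pi(I(P,\cdot)-I(\pi,\cdot))>0$ and contradicting the accuracy inequality.
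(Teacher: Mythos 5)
Your proof is correct, and it takes a genuinely more streamlined route than the paper's. The identity $I(q,w_i)=\int_{\R^n}x_i\bigl(\mathbf{1}_{[x_i\ge 0]}-\mathbf{1}_{[q(X)\ge 0]}\bigr)\,d\mu$ lets you cancel the $q$-independent term and reach the single-integral formula $\int\bigl(\pi(X)^+-\phi(X)\bigr)\,d\mu$ in two lines, whereas the paper reaches the same quantity (in the split form $-\int_{D^c_\pi}\pi(X[X\in D_P])\,d\mu+\int_{D_\pi}\pi(X[X\notin D_P])\,d\mu$) only after a page of set algebra on the four error-set integrals. The symmetrisation step is also new: because the paper never puts the integrand in an even form, its converse needs a two-case analysis ($X\notin D_\pi$ versus $X\in D_\pi$), the second case resting on an unproved ``similar construction'' to Lemma \ref{lemma:1}; your single formula makes the integrand strictly positive on all of $\Y$ at once, so one case suffices, and your explicit $\mu$ (Lebesgue on $B\cup(-B)$ plus a small symmetric Gaussian) is more concrete than the paper's instruction to ``concentrate sufficient weight on $\Y$.'' One correction to your commentary, though it does not affect the validity of the argument: the pointwise inequality $\phi(X)\ge\pi(X)^+$ is not false under Total Trust --- it holds for $\mu$-almost every $X$, since trust applied at $X$ gives $a\ge 0$, trust applied at $-X$ gives $c\le 0$, and then $\max(a+c,0)\le a$. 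What is true is that it does not follow from trust at $X$ alone; the pairing with $-X$ goes through Total Trust applied to the gamble $-X$, not through condition (iii) on $\mu$. Consequently condition (iii) is not actually needed for the forward direction (you could integrate the pointwise inequality directly); it earns its keep only in the converse, where an admissible symmetric $\mu$ must be exhibited. Your integrability caveat (finite first moments of $\mu$, so the scores are real-valued and the sum--integral swap is licit) is a point the paper silently assumes, and is worth keeping.
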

\begin{proof}
  The main work of the proof consists of expressing the left hand side of \eqref{expInacc} in terms of the
  integrals (over suitable domains) of the previsions $\pi(X[P(X) \geq 0])$ and $\pi(X[P(X) < 0])$.
  We can then use our assumptions about Total Trust to determine the sign of these integrals, and thus show
  that \eqref{expInacc} holds iff $\pi$ Totally Trusts $P$.

  We can rewrite $\pi(I(P, \cdot) - I(\pi, \cdot))$ as follows:
  \begin{flalign*}
    & \pi(I(P, \cdot) - I(\pi, \cdot)) = \\ 
    &= \sum_{i=1}^n \pi(w_i) \left( \int_{\F^{P_i}_i} \lvert x_i \rvert d\mu + \int_{\Sent^{P_i}_i} \lvert x_i \rvert d\mu - \int_{\F^\pi_i} \lvert x_i \rvert d\mu - \int_{\Sent^\pi_i} \lvert x_i \rvert d\mu\right) \\
    &= \sum_{i=1}^n \pi(w_i) \left( \int_{D_{P_i}{\sim} D_{w_i}} \lvert x_i \rvert d\mu + \int_{D_{w_i} {\sim} D_{P_i}} \lvert x_i \rvert d\mu - \int_{D_\pi {\sim} D_{w_i}} \lvert x_i \rvert d\mu - \int_{D_{w_i}{\sim}D_{\pi}} \lvert x_i \rvert d\mu\right) \\
  \end{flalign*}
  Note that when $X \in D_i$, we have $x_i \geq 0$, and when $X \notin D_i$, we have $x_i < 0$. So we can rewrite the
  last line above as:
  \begin{equation}
    \sum_{i=1}^n \pi(w_i) \left( - \int_{D_{P_i}{\sim} D_{w_i}} x_i d\mu + \int_{D_{w_i} {\sim} D_{P_i}} x_i d\mu + \int_{D_\pi {\sim} D_{w_i}} x_i d\mu - \int_{D_{w_i}{\sim}D_{\pi}} x_i d\mu\right)\\
    \label{presplit}
  \end{equation}
  Note that the domain of integration of the first integral in \eqref{presplit}, $D_{P_i}{\sim} D_{w_i}$,
  can be written as the union of two disjoint sets $D_{P_i} \cap D_\pi \cap D^c_{w_i}$ and $D_{P_i} \cap D^c_\pi \cap D^c_{w_i}$.
  The domain of the third integral can similarly be written as the union of $D_{\pi} \cap D_{P_i} \cap D^c_{w_i}$
  and $D_{\pi} \cap D^c_{P_i} \cap D^c_{w_i}$. So we have:
  \begin{flalign*}
    & - \int_{D_{P_i}{\sim} D_{w_i}} x_i d\mu + \int_{D_\pi {\sim} D_{w_i}} x_i d\mu \\
    &= - \int_{D_{P_i} \cap D_\pi \cap D^c_{w_i}} x_i d\mu  - \int_{D_{P_i} \cap D^c_\pi \cap D^c_{w_i}} x_i d\mu
    + \int_{D_{\pi} \cap D_{P_i} \cap D^c_{w_i}} x_i d\mu + \int_{D_{\pi} \cap D^c_{P_i} \cap D^c_{w_i}} x_i d\mu\\
    &= - \int_{D_{P_i} \cap D^c_\pi \cap D^c_{w_i}} x_i d\mu + \int_{D_{\pi} \cap D^c_{P_i} \cap D^c_{w_i}} x_i d\mu
  \end{flalign*}
  We can similarly rewrite the domains of the second and fourth integrals in \eqref{presplit} to show:
  \begin{equation*}
    \int_{D_{w_i} {\sim} D_{P_i}} x_i d\mu - \int_{D_{w_i}{\sim}D_{\pi}} x_i d\mu =  \int_{D_{w_i} \cap D_{\pi} \cap D^c_{P_i}} x_i d\mu - \int_{D_{w_i} \cap D_{P_i} \cap D^c_\pi} x_i d\mu
  \end{equation*}
  Together, these equalities allow us to rewrite the term within the sum in \eqref{presplit} as:
  \begin{equation*}
    - \int_{D_{P_i} \cap D^c_\pi \cap D^c_{w_i}} x_i d\mu + \int_{D_{\pi} \cap D^c_{P_i} \cap D^c_{w_i}} x_i d\mu 
    + \int_{D_{w_i} \cap D_{\pi} \cap D^c_{P_i}} x_i d\mu - \int_{D_{w_i} \cap D_{P_i} \cap D^c_\pi} x_i d\mu
  \end{equation*}
  By reordering the intersections we get:
  \begin{flalign*}
    &- \int_{D_{P_i} \cap D^c_\pi \cap D^c_{w_i}} x_i d\mu + \int_{D^c_{P_i} \cap D_\pi \cap D^c_{w_i}} x_i d\mu
    +  \int_{D^c_{P_i} \cap D_\pi \cap D_{w_i} } x_i d\mu - \int_{D_{P_i} \cap D^c_\pi \cap D_{w_i}} x_i d\mu\\
    &= - \int_{(D_{P_i} \cap D^c_\pi \cap D^c_{w_i}) \cup (D_{P_i} \cap D^c_\pi \cap D_{w_i})} x_i d\mu
    + \int_{(D^c_{P_i} \cap D_\pi \cap D^c_{w_i}) \cup (D^c_{P_i} \cap D_\pi \cap D_{w_i} )} x_i d\mu \\
    &= - \int_{D_{P_i} \cap D^c_\pi} x_i d\mu + \int_{D^c_{P_i} \cap D_\pi} x_i d\mu
  \end{flalign*}
  So we can rewrite \eqref{presplit} as:
  \begin{flalign*}
    &\sum_{i=1}^n \pi(w_i) \left( - \int_{D_{P_i} \cap D^c_\pi} x_i d\mu + \int_{D^c_{P_i} \cap D_\pi} x_i d\mu \right)\\
    &= \sum_{i=1}^n \pi(w_i) \left( - \int_{D^c_\pi} x_i \chi_{D_{P_i}}d\mu + \int_{D_\pi} x_i \chi_{D^c_{P_i}} d\mu \right)\\
    &= - \int_{D^c_\pi} \sum_{i=1}^n \pi(w_i) x_i \chi_{D_{P_i}}d\mu + \int_{D_\pi} \sum_{i=1}^n \pi(w_i)x_i \chi_{D^c_{P_i}} d\mu \\
  \end{flalign*}
  Note that if $X$ is a gamble for which $\pi([X \in D_P]) = 0$, then $\pi(w_i) = 0$ for all $i$ such that
  $w_i \in [X \in D_P]$. That is, $\pi(w_i) = 0$ for all $i$ such that $\chi_{D_{P_i}}(X) = 1$. So these
  gambles do not contribute to the first integral. In the right integral, we can similarly ignore the gambles
  for which $\pi([X \notin D_P]) = 0$. By removing them from the domain, we can rewrite the above as:
  \begin{flalign}
    & - \int_{X : \pi([X \in D_P]) \neq 0, X \notin D_\pi} \sum_{i=1}^n \pi(w_i) x_i \chi_{D_{P_i}}d\mu + \int_{X : \pi([X \notin D_P]) \neq 0, X \in D_\pi} \sum_{i=1}^n \pi(w_i)x_i \chi_{D^c_{P_i}} d\mu \\
    &= - \int_{X : \pi([X \in D_P]) \neq 0, X \notin D_\pi}  \pi(X[X \in D_P])d\mu + \int_{X : \pi([X \notin D_P]) \neq 0, X \in D_\pi} \pi(X[X \notin D_P])d\mu
  \end{flalign}
  So we have shown the following equality:
  \begin{flalign}
    \label{finalEq}
    &\pi(I(P, \cdot) - I(\pi, \cdot)) \\
    &= - \int_{X : \pi([X \in D_P]) \neq 0, X \notin D_\pi} \pi(X[X \in D_P])d\mu + \int_{X : \pi([X \notin D_P]) \neq 0, X \in D_\pi} \pi(X[X \notin D_P]) d\mu
  \end{flalign}

  For the left to right direction of the proposition, assume $\pi$ Totally Trusts $P$. Note that $[X \in D_P]$ is
  equivalent to $[P(X) \geq 0]$, and $[X \notin D_P]$ is equivalent to $[P(X) < 0]$. For every $X$ in the
  domain of integration of the first integral, the conditional prevision $\pi(\cdot | [X \in D_P])$ is defined,
  and $\pi(X[X \in D_P])$ has the same sign as $\pi(X | [X \in D_P]) = \pi(X | [P(X) \geq 0])$. And since $\pi$
  Totally Trusts $P$, this means $\pi(X | [X \in D_P]) \geq 0$ on every $X$ in the domain of integration, making
  the first integral non-negative. The same reasoning shows that the second integral is non-positive. Hence the whole
  expression on the right hand side of \eqref{finalEq} is no greater than zero. 

  For the other direction of the proposition, assume that $\pi$ does not Totally Trust $P$. Then
  there is some $X$ on which Total Trust is violated, i.e. $\pi(X | [P(X) \geq 0]) < 0$. If
  $X \notin D_\pi$, $X$ will be within the domain of integration of the first integral in \eqref{finalEq}.
  Let $\Y$ be the open set constructed as in Lemma \ref{lemma:1}, where we pick $\delta$ small enough to ensure 
  $\Y \subseteq D^c_\pi$. By construction, $\Y$ is also a subset of $\{X : \pi([X \in D_P]) \neq 0\}$, and
  $\pi(Y | [P(Y) \geq 0]) < 0$ for every $Y \in \Y$. So $\Y$ is contained in the domain of the first
  integral, and the integrand is strictly negative over $\Y$. Consider now the set $-\Y = \{-Y : Y \in \Y\}$.
  We have $-\Y \subseteq D_\pi$, since $\Y \subseteq D^c_\pi$. And by property (2) of Lemma \ref{lemma:1},
  for every $-Y \in \Y$ we have:
  \begin{equation}
    [Y \in D_P] \equiv [P(Y) \geq 0] \equiv [P(Y) > 0] \equiv [P(-Y) < 0] \equiv [-Y \notin D_P]
  \end{equation}
  And thus $\pi([-Y \notin D_p]) \neq 0$. So $-\Y$ is in the domain of the second integral of \eqref{finalEq}.
  Furthermore, for every $-Y \in -\Y$:
  \begin{flalign}
    \pi(-Y | [- Y \notin D_P]) > 0 &\iff \pi(Y | [-Y \notin D_P]) < 0\\
    &\iff \pi(Y | [Y \in D_P]) < 0 \\
    &\iff \pi(Y | [P(Y) \geq 0]) < 0
  \end{flalign}
  where the last condition holds because $Y \in \Y$. So the integrand is strictly positive over $-\Y$. Both $\Y$
  and $-\Y$ are open in $\R^n$, thus we can find a measure $\mu$ over $R^n$ such that: (i) $\mu$ is absolutely
  continuous w.r.t the Lebesgue measure, (ii) $\mu$ assigns positive measure to every open subset of $R^n$,
  (iii) $\mu(A) = \mu(-A)$ for every $A \subseteq \R^n$, and such that $\mu$ concentrates sufficient weight on
  $\Y$ (and therefore also $-\Y$) to make the right hand side of \eqref{finalEq} strictly positive.
  This $\mu$ defines a strictly proper global inaccuracy measure for which $\pi(I(P, \cdot) - I(\pi, \cdot)) > 0$.
   
  If $X \in D_\pi$, $X$ is in the domain of integration of the second integral
  in \eqref{finalEq}. Note that $\pi(X |[P(X) < 0])$ must be defined and strictly positive, for otherwise
  $\pi(X)$ would be strictly negative, contradicting $X \in D_\pi$. We can use
  a similar construction as the one in Lemma \ref{lemma:1} to find an open set $\Y \subseteq D_\pi$ such that
  $\Y$ is a subset of $\{X : \pi([X \notin D_P]) \neq 0, X \in D_\pi\}$, and $\pi(Y | [P(Y) < 0]) > 0$
  for every $Y \in \Y$. So $\Y$ is contained in the domain of the second integral in \eqref{finalEq}, and
  the integrand is strictly positive over $\Y$. Analogously to the previous case, the set $-\Y$ is contained
  in the domain of the first integral of \eqref{finalEq}, and the integrand is strictly negative on $-\Y$.
  As above, we can find $\mu$ that defines a strictly proper global inaccuracy measure for which
  $\pi(I(P, \cdot) - I(\pi, \cdot)) > 0$.
\end{proof}

\bibliography{list19}

\end{document}